\theoremstyle{definition}
\newtheorem{theorem}{Theorem}
\newtheorem{lemma}{Lemma}
\newtheorem{proposition}{Proposition}
\newtheorem{definition}{Definition}
\newtheorem{branch}{Branching rule}
\crefname{branch}{Branching rule}{Branching rules}
\newtheorem{reduce}{Reduction rule}
\crefname{reduce}{Reduction rule}{Reduction rules}
\newtheorem{assume}{Assumption}
\crefname{assume}{Assumption}{Assumptions}
\newcommand{\cc}{{\tt cc}}
\newcommand{\bridge}{{\tt b}}
\title{An Improved Deterministic Parameterized Algorithm for Cactus Vertex Deletion}
\author[1]{Yuuki Aoike}
\author[2]{Tatsuya Gima}
\author[2]{Tesshu Hanaka}
\author[1]{Masashi Kiyomi}
\author[3]{Yasuaki Kobayashi}
\author[4]{Yusuke Kobayashi}
\author[5]{Kazuhiro Kurita}
\author[2]{Yota Otachi}
\affil[1]{School of Data Science, Yokohama City University, Kanagawa, Japan}
\affil[2]{Graduate School of Informatics, Nagoya University}
\affil[3]{Graduate School of Informatics, Kyoto University, Kyoto, Japan}
\affil[4]{Research Institute for Mathematical Sciences, Kyoto University, Kyoto, Japan}
\affil[5]{National Institute of Informatics, Tokyo, Japan}
\date{}
\begin{document}

\maketitle

\begin{abstract}
    A {\em cactus} is a connected graph that does not contain $K_4 - e$ as a minor.
    Given a graph $G = (V, E)$ and an integer $k \ge 0$, {\sc Cactus Vertex Deletion} (also known as {\sc Diamond Hitting Set}) is the problem of deciding whether $G$ has a vertex set of size at most $k$ whose removal leaves a forest of cacti.
    The previously best deterministic parameterized algorithm for this problem was due to Bonnet et al. [WG 2016], which runs in time $26^kn^{O(1)}$, where $n$ is the number of vertices of $G$.
    In this paper, we design a deterministic algorithm for {\sc Cactus Vertex Deletion}, which runs in time $17.64^kn^{O(1)}$.
    As an almost straightforward application of our algorithm, we also give a deterministic $17.64^kn^{O(1)}$-time algorithm for {\sc Even Cycle Transversal}, which improves the previous running time $50^kn^{O(1)}$ of the known deterministic parameterized algorithm due to Misra et al. [WG 2012].
\end{abstract}
\section{Introduction}

A connected graph is a {\em cactus} if every edge belongs to at most one cycle.
A {\em cactus forest} is a graph such that every connected component is a cactus.
In this paper, we consider the following problem.

\begin{definition}[{\sc Cactus Vertex Deletion}]
    Given a graph $G = (V, E)$ and an integer $k \ge 0$, the problem asks whether $G$ has a vertex set $X \subseteq V$ with $|X| \le k$ whose removal leaves a cactus forest.
\end{definition}

The problem is one of {\em vertex deletion problems for hereditary properties}, which have been both intensively and extensively studied in the field of parameterized algorithms and complexity.
The best known problem in this context is {\sc Vertex Cover}.
The problem asks whether there is a vertex set of size at most $k$ whose removal leaves an edge-less graph.
A naive algorithm solves {\sc Vertex Cover} in $O^*(2^k)$ time\footnote{The notation $O^*$ suppresses a polynomial factor of the input size.}, and after a series of improvements, the fastest known algorithm is due to Chen et al. \cite{ChenKX10:Improved}, which runs in time $O^*(1.2738^k)$.

Another example of this kind of problems is {\sc Feedback Vertex Set}.
The problem asks whether an input graph $G = (V, E)$ has a vertex set of size at most $k$ that hits all the cycles in the graph.
In other words, the goal of this problem is to compute $X \subseteq V$ with $|X| \le k$ such that the graph obtained from $G$ by deleting $X$ is a forest.
This problem is also intensively studied, and several deterministic and randomized algorithms have been proposed so far~\cite{Becker:Randomized:2000,Cygan:Solving:2011,Downey:Fixed:1995,Guo:Compression:2006,IwataK19:Improved,LiN20:Detecting}.
The current best running time is due to Iwata and Kobayashi~\cite{IwataK19:Improved} for deterministic algorithms and Li and Nederlof~\cite{LiN20:Detecting} for randomized algorithms, which run in time $O^*(3.460^k)$ and $O^*(2.7^k)$, respectively. 

The gap between the running time of deterministic and randomized algorithms sometimes emerges for vertex deletion problems to ``sparse'' hereditary classes of graphs, such as {\sc Feedback Vertex Set}.
For instance, {\sc Pseudo Forest Vertex Deletion} can be solved deterministically in time $O^*(3^k)$~\cite{BodlaenderOO18:faster} and randomizedly in time $O^*(2.85^k)$~\cite{GowdaLPPS20:Improved} and {\sc Bounded Degree-$2$ Vertex Deletion} can be solved deterministically in time $O^*(3.0645^k)$~\cite{Xiao16:parameterized} and randomizedly in time $O^*(3^k)$~\cite{FengWLC15:Randomized}.
Among others, the known gap on {\sc Cactus Vertex Deletion} is remarkable: Bonnet et al.~\cite{BonnetBKM16:Parameterized} presented a deterministic $O^*(26^k)$-time algorithm, while Kolay et al.~\cite{KolayLPS17:Quick} presented a randomized $O^*(12^k)$-time algorithm.

In this paper, we narrow the gap between the running time of deterministic and randomized algorithms by giving an improved deterministic algorithm for {\sc Cactus Vertex Deletion}.
\begin{theorem}\label{thm:CVD}
    {\sc Cactus Vertex Deletion} can be solved deterministically in time $O^*(17.64^k)$.
\end{theorem}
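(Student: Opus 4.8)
The plan is to solve \textsc{Cactus Vertex Deletion} by iterative compression together with a branching procedure whose running time is controlled by a carefully weighted measure. In the outer loop I would build $G$ vertex by vertex, maintaining a cactus-vertex-deletion set of size at most $k+1$ of the current induced subgraph; each time the set reaches size $k+1$ we call a subroutine that either produces an equivalent set of size at most $k$ or certifies that none exists. This leaves the core problem to which the whole analysis is devoted: given $G$, a budget $k$, and a set $W$ with $|W| \le k+1$ such that $G - W$ is a cactus forest, decide whether some $X \subseteq V$ of size at most $k$ makes $G - X$ a cactus forest. Because there are only polynomially many outer rounds, it suffices to run the subroutine within $O^*(17.64^k)$.

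I would drive the subroutine by branching while tracking a potential of the form $\mu = k + \psi$, where $\psi$ is a nonnegative, suitably weighted combination of structural counts of the cactus forest $F := G - W$ and of its interface with $W$: the number of relevant connected components $\cc$, the number of bridges $\bridge$, and the number of pseudo-bridges $\pbridge$. Before branching I would exhaustively apply reduction rules --- deleting vertices of degree at most one, suppressing degree-two vertices of $F$ with no neighbour in $W$, and discarding any vertex certified to lie in no obstruction --- chosen so that they never increase $\mu$. The weights are tuned so that in a fully reduced instance $\psi = O(|W|)$ translates into $\psi \le k + O(1)$, and hence $\mu \le 2k + O(1)$; this is where the squared shape of the final bound, $17.64 = 4.2^2$, originates.

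The heart of the matter is the branching itself. I would choose a vertex $v$ according to how it attaches to $W$ and to the blocks of $F$, and branch on the ways a minimal solution can interact with the structure around $v$. The cactus constraint --- every block of $G - X$ must be a single edge or a single cycle --- means that a retained vertex closing a second cycle inside a block is fatal, so whenever $v$ has, say, two neighbours on one cycle of $F$ or two internally disjoint connections to $W$, only a small, explicitly described set of vertices can legally enter or avoid $X$, and we branch on those cases. In each branch either $k$ drops (a vertex is placed in $X$) or some component, bridge, or pseudo-bridge is resolved and $\psi$ drops; the task is to check that every branching vector has branching number at most $4.2$ against $\mu$. Combined with $\mu \le 2k + O(1)$, this bounds the search tree by $O^*(4.2^{2k}) = O^*(17.64^k)$. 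I expect the main obstacle to be precisely this case analysis: in contrast to \textsc{Feedback Vertex Set}, where a single extra edge between two forest vertices forces a deletion, here a block may still absorb one cycle, so the rules must use $\bridge$ and $\pbridge$ to account for exactly how much cyclic slack remains in each block, and it is the tightest such configurations that fix the branching number at $4.2$.

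Finally, the \textsc{Even Cycle Transversal} bound should follow with little extra work, since a graph has no even cycle exactly when each of its blocks is an edge or an \emph{odd} cycle --- an ``odd cactus forest.'' Annotating cycles with their parity and re-checking that the reduction and branching rules respect this annotation should carry the same measure through essentially verbatim, giving the stated $O^*(17.64^k)$ running time.
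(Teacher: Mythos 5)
Your high-level framework---iterative compression wrapped around a measure-and-conquer branching routine---is indeed the paper's framework, but the proposal has a genuine gap: everything that actually constitutes the proof is deferred. The theorem stands or falls with the concrete list of reduction and branching rules and the verification of their branching vectors, and you explicitly leave this open (``the task is to check that every branching vector has branching number at most $4.2$''; ``I expect the main obstacle to be precisely this case analysis''). Moreover, your constants are fitted to the target number rather than derived, and they do not match how the bound actually arises. In the paper, the branching is run only on the \emph{disjoint} problem (the new solution must avoid the old solution $S$), with measure $\mu = \alpha k + \beta\cdot\cc(G[S]) + \gamma\cdot\bridge(G[S])$ where $\alpha = 1$, $\beta = 0.4052$, $\gamma = 0.0726$; the worst branching vectors, e.g.\ $(1,1,1,\beta-2\gamma) = (1,1,1,0.26)$, have root $c < 7.3961$, and since $\mu \le \alpha k + \beta(k+1) \le 1.4052k + O(1)$ this gives $O^*(16.64^k)$ for the disjoint problem. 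The final constant is $17.64 = 16.64 + 1$, the $+1$ being the cost of the compression wrapper (guessing the intersection of the new solution with the old one); it is only a numerical coincidence that this equals $4.2^2$.

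Two concrete obstacles in your plan follow from this mismatch. First, you propose to branch directly on the non-disjoint compression step (your $X \subseteq V$ may intersect $W$); then branches must also delete vertices of $W$, and your potential $\psi$, defined from $F = G - W$ and its interface with $W$, is never shown to decrease under such deletions---avoiding exactly this complication is why one guesses $X \cap W$ up front and pays the $+1$ in the base. If instead you do guess $X \cap W$, then your arithmetic ($\mu \le 2k$, branching number $4.2$) is no longer the relevant computation: you need the disjoint version in $O^*(16.64^k)$, which your stated constants do not give. Second, your measure lives on the forest side ($\cc$, $\bridge$, and an undefined notion of ``pseudo-bridge'' of $F$), whereas the mechanism that makes the ``keep $v$'' branch profitable in the paper lives on the $S$-side: if a kept vertex $v$ has two neighbors in one component $C$ of $G[S]$, the path in $G[C]$ between them must consist of bridges (otherwise $G[C \cup \{v\}]$ contains a $K_4 - e$ subdivision and $v$ is forced into the solution), so moving $v$ into $S$ destroys at least one bridge of $G[S]$ and the measure drops by $\gamma$. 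Your proposal contains no analogue of this observation, and without one there is no reason the ``keep'' branches decrease any of your counts, which is precisely the difficulty (inherited from Bonnet et al.) that the paper's choice of measure was designed to overcome.
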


As a variant of {\sc Cactus Vertex Deletion}, we consider {\sc Even Cycle Transversal} defined as follows.
A cactus is called an {\em odd cactus} if every cycle in it has an odd number of vertices.

\begin{definition}[{\sc Even Cycle Transversal}]
    Given a graph $G = (V, E)$ and an integer $k \ge 0$, the problem asks whether $G$ has a vertex set $X \subseteq V$ with $|X| \le k$ whose removal leaves a forest of odd cacti.
\end{definition}

Note that a graph has no cycles of even length if and only if it is a forest of odd cacti~\cite{KolayLPS17:Quick}.
Kolay et al.~\cite{KolayLPS17:Quick} gave an $O^*(12^k)$-time randomized algorithm and Misra et al.~\cite{MisraRRS12:Parameterized} gave an $O^*(50^k)$-time deterministic algorithm for {\sc Even Cycle Transversal}.
In this paper, we improve the running time of the deterministic algorithm for {\sc Even Cycle Transversal} as well as {\sc Cactus Vertex Deletion}.
\begin{theorem}\label{thm:d-CVD-ECT}
    {\sc Even Cycle Transversal} can be solved deterministically in time $O^*(17.64^k)$.
\end{theorem}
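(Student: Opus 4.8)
The plan is to reuse the branching algorithm behind Theorem~\ref{thm:CVD} essentially without change, and to enforce the parity condition by a polynomial-time computation at each leaf of its search tree. I rely on two facts already recorded: a graph is a forest of odd cacti if and only if it has no even cycle, and every diamond (two cycles sharing at least one edge, i.e.\ a $K_4-e$ minor model) contains an even cycle. The latter holds because the two cycles forming the diamond together with their symmetric difference have lengths summing to an even number, and so cannot all be odd. In particular, every solution $X$ of {\sc Even Cycle Transversal} is also a solution of {\sc Cactus Vertex Deletion}: $G-X$ has no even cycle, hence no diamond, hence is a cactus forest.

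First I would run the algorithm of Theorem~\ref{thm:CVD} on $(G,k)$ verbatim. Its search tree has $O^*(17.64^k)$ leaves, and at each leaf the set $S$ of vertices deleted along the root-to-leaf path satisfies $|S|\le k$ and $G-S$ is a cactus forest. The single modification is at the leaves: instead of merely certifying that $G-S$ is a cactus forest, I would compute a minimum set $T$ of vertices meeting every even cycle of $G-S$, and answer \emph{yes} exactly when $|S|+|T|\le k$ holds at some leaf. Since the blocks of a cactus forest are bridges and cycles, the even cycles are pairwise edge-disjoint and any two of them share at most one cut vertex; a minimum hitting set for them is therefore computed by a routine dynamic program over the block-cut tree in polynomial time. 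Adding this to each of the $O^*(17.64^k)$ leaves keeps the total running time at $O^*(17.64^k)$.

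For correctness I would prove both directions. If $(G,k)$ is a yes-instance, fix a solution $X$ with $|X|\le k$. As $X$ is also a cactus-vertex-deletion solution, the correctness argument behind Theorem~\ref{thm:CVD} yields a root-to-leaf path along which every branched vertex is chosen from $X$, ending at a leaf with $S\subseteq X$ and $G-S$ a cactus forest. Writing $G-X=(G-S)-(X\setminus S)$, the absence of even cycles in $G-X$ forces $X\setminus S$ to meet every even cycle of $G-S$, so the minimum hitting set obeys $|T|\le|X\setminus S|$ and hence $|S|+|T|\le|X|\le k$ at that leaf. Conversely, for any leaf and its hitting set $T$, deleting $T$ from the cactus forest $G-S$ creates no new cycle and removes all even ones, so $S\cup T$ is a valid transversal of size $|S|+|T|$. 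Thus the optimum equals $\min_{\text{leaf}}\bigl(|S|+|T|\bigr)$, and the test ``$\le k$'' is exactly right.

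The step I expect to require the most care---and the only genuinely new ingredient---is the strong exhaustiveness used above: for every cactus-vertex-deletion solution $X$ there must be a leaf whose deleted set is contained in $X$. The recursive part provides this for free, because each branching rule branches on a vertex set that every solution of the current instance must intersect, so one can always descend along a vertex of $X$. What must be checked rule by rule is that the reduction rules of Theorem~\ref{thm:CVD} are compatible with an arbitrary $X$: any vertex they force into the solution lies in every cactus-vertex-deletion solution, hence in $X$, while the remaining rules merely replace the instance by an equivalent one. Once this solution-preserving property is confirmed, together with the correctness of the block-cut-tree dynamic program, the reduction is complete.
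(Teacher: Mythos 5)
Your reduction has two genuinely correct ingredients: every even cycle transversal is also a cactus deletion set (your parity count on the three paths of a diamond is right), and hitting the even cycle blocks of a cactus forest is a polynomial-time problem on the block-cut tree, so the soundness direction (every leaf extension $S\cup T$ is a valid transversal) is fine. The fatal gap is exactly the step you flag as needing the most care, the ``strong exhaustiveness'' property, and it is not a formality: it is false for the algorithm of Theorem~\ref{thm:CVD} as it stands. Safeness of a reduction rule only says the instances before and after are equi-satisfiable; it does \emph{not} say that an arbitrary fixed solution $X$ survives the rule. Reduction rule~\ref{rr:deg-2} is precisely a rule of the weaker kind: its justification, Lemma~\ref{lem:eliminate-deg2}, only asserts that \emph{some} smallest cactus deletion set avoids the degree-two vertex $v$. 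If your fixed even cycle transversal $X$ contains $v$ --- which can be forced, since for parity reasons $v$ may be the vertex $X$ uses to kill an even cycle --- then after the rule fires, $X\setminus\{v\}$ need not be a cactus deletion set of the reduced instance at all (an obstruction through $v$ persists as a pair of parallel edges), so every branch below deletes only vertices outside $X$ and no leaf has its deleted set contained in $X$. Your sentence ``the remaining rules merely replace the instance by an equivalent one'' is where this is glossed over. One can attempt a repair by an exchange argument, replacing $v$ by one of its current neighbors, but you must then verify that the exchanged set is still an even cycle transversal of the \emph{original} graph; since by the time Reduction rule~\ref{rr:deg-2} fires, the current edges already represent paths of the original graph produced by earlier contractions, this verification requires tracking cycle \emph{parities} through compositions of degree-two eliminations. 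That parity bookkeeping is exactly the technical content of the paper's own proof (the binary edge weights, Assumption~\ref{assume:parity}, Branching rule~\ref{br:even-K2}, and Lemma~\ref{lem:ECT:feasible}); your black-box reduction does not get to skip it.

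Two secondary inaccuracies would also need fixing. First, the algorithm behind Theorem~\ref{thm:CVD} is not a single branching tree on $(G,k)$: it is iterative compression (Lemma~\ref{lem:compression}) wrapped around the algorithm for \textsc{Disjoint Cactus Vertex Deletion}, whose branches also move vertices into $S$ and whose reduction rules delete vertices from the graph \emph{without} placing them in the solution (Reduction rules~\ref{rr:isolated-component}, \ref{rr:deg-1}, \ref{rr:neighbor-1}) or rewrite the graph (Reduction rule~\ref{rr:deg-2}). Hence ``the set of vertices deleted along the root-to-leaf path'' is not the solution prefix, and the leaf's graph is not $G$ minus that set. Second, and relatedly, the hitting set $T$ must be computed on the \emph{original} graph minus the solution prefix, never on the leaf's reduced graph: Reduction rule~\ref{rr:isolated-component} silently discards cactus components that may still contain even cycles, and Reduction rule~\ref{rr:deg-2} destroys cycle lengths. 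These two points are patchable with care, but the completeness argument collapses back onto the parity-preservation machinery above, so the proposal as written does not establish Theorem~\ref{thm:d-CVD-ECT}.
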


The idea of our algorithms follows that used in \cite{BonnetBKM16:Parameterized}.
We solve the disjoint version of {\sc Cactus Vertex Deletion} with a branching algorithm.
In this version, given a vertex subset $S \subseteq V$ such that $|S| \le k + 1$ and the subgraph induced by $V \setminus S$, denoted $G[V \setminus S]$, is a cactus forest, the problem asks whether there is a vertex subset $X \subseteq V \setminus S$ such that $|X| \le k$ and $G[V \setminus X]$ is a cactus forest.
To solve this problem, Bonnet et al.~\cite{BonnetBKM16:Parameterized} gave a branching algorithm with the measure and conquer analysis~\cite{FominGK09:measure}.
They used measure $k + \cc(G[S])$, where $\cc(G[S])$ is the number of connected components in $G[S]$, and proved that each branch of their algorithm strictly decreases this measure.
The main difficulty with using this measure is that when we consider a vertex $v \in V \setminus S$ such that $v$ has at least two neighbors only in a single connected component in $G[S]$, then one of the branch, for which $v$ is determined to be not deleted, does not decrease the measure. 
We also use the measure and conquer analysis with a slightly elaborate measure $\alpha k + \beta \cdot \cc(G[S]) + \gamma\cdot\bridge(G[S])$, where $\alpha,\beta,\gamma$ are some constants and $\bridge(G[S])$ is the number of bridges in $G[S]$, which allows us to decrease the measure efficiently: When $v$ is determined to be not deleted, the number of bridges in $G[S]$ is decreased in the above situation since otherwise $v$ belongs to a $K_4 - e$ minor.
We believe that although our measure is slightly involved compared to that in \cite{BonnetBKM16:Parameterized}, the algorithm itself and its analysis would be simpler than theirs.

\section{Preliminaries}
\paragraph{Graphs.}
Throughout the paper, graphs have no self-loops but may have multiedges.
Let $G = (V, E)$ be a graph.
We write $V(G)$ and $E(G)$ to denote the sets of vertices and edges of $G$, respectively.
For two distinct vertices $u, v$ in $G$, we denote by $m(u, v)$ the number of edges between $u$ and $v$.
Let $v \in V$. 
The {\em degree} of $v$ is the number of edges incident to it.
We denote by $N_G(v)$ the set of neighbors of $v$ in $G$.
Note that as $G$ may have multiedges, $|N_G(v)|$ may not be equal to its degree.
For $X \subseteq V$, the subgraph of $G$ induced by $X$ is denoted as $G[X]$.
We denote by $\cc(G)$ the number of connected components in $G$.
A vertex $v \in V$ is called a {\em cut vertex} of $G$ if $\cc(G[V \setminus \{v\}]) > \cc(G)$ and an edge $e \in E$ is called a {\em bridge} of $G$ if $\cc(G - e) > \cc(G)$, where $G - e$ is the graph obtained from $G$ by deleting $e$.
Note that it holds that $\cc(G - e) = \cc(G) + 1$ when $e$ is a bridge of $G$.
The number of bridges in $G$ is denoted by $\bridge(H)$.
\begin{lemma}\label{lem:cc+b}
    Let $H$ be a multigraph with $h$ vertices.
    Then, it holds that $\cc(H) + \bridge(H) \le h$.
\end{lemma}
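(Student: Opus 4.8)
The plan is to prove the bound $\cc(H) + \bridge(H) \le h$ by induction on the number of bridges in $H$. The base case is when $\bridge(H) = 0$, so that I only need to show $\cc(H) \le h$; this is immediate since each connected component contains at least one vertex, hence the number of components can never exceed the number of vertices.

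For the inductive step, suppose $H$ has at least one bridge $e$. First I would pick an arbitrary bridge $e \in E(H)$ and form $H - e$. By the definition of a bridge recalled just above the statement, deleting $e$ splits one component into two, so $\cc(H - e) = \cc(H) + 1$. Next I would account for the bridges of $H - e$: every bridge of $H$ other than $e$ remains a bridge of $H - e$, since deleting an edge from a component that does not contain it does not affect whether another edge is a bridge, and within the same component removing a bridge cannot create a new cycle through a previously-bridgeless edge. Thus $\bridge(H - e) \ge \bridge(H) - 1$. Since $H - e$ has the same vertex set as $H$, the induction hypothesis applied to $H - e$ (which has strictly fewer bridges) gives $\cc(H - e) + \bridge(H - e) \le h$.

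Combining these, I would write
\[
\cc(H) + \bridge(H) = \bigl(\cc(H - e) - 1\bigr) + \bridge(H) \le \cc(H - e) + \bridge(H - e) \le h,
\]
where the first inequality uses $\bridge(H) - 1 \le \bridge(H - e)$ and the second is the induction hypothesis. This closes the induction and establishes the lemma.

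The main obstacle I anticipate is the bookkeeping claim that $\bridge(H - e) \ge \bridge(H) - 1$, i.e.\ that removing one bridge does not destroy the bridge-status of the remaining bridges. This requires a small argument: an edge $f \neq e$ is a bridge of $H$ precisely when $f$ lies on no cycle, and deleting an edge can only destroy cycles, never create them, so $f$ still lies on no cycle in $H - e$ and remains a bridge. Making this monotonicity precise (bridges are preserved under edge deletion) is the one place where care is needed; everything else is routine counting. An alternative route that avoids this bookkeeping entirely is to contract each bridge and argue directly, but I expect the deletion-based induction above to be the cleanest.
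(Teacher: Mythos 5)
Your strategy is essentially the paper's own proof unrolled: the paper deletes all bridges of $H$ at once and asserts $\cc(H) + \bridge(H) = \cc(H') \le h$ for the resulting graph $H'$, which is justified by exactly the one-bridge-at-a-time argument you carry out by induction. So the approach is the right one; however, there is one step your induction genuinely needs and that you only assert.

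The gap is the well-foundedness of the induction, namely the parenthetical claim that $H - e$ ``has strictly fewer bridges'' than $H$. What you actually prove is $\bridge(H - e) \ge \bridge(H) - 1$ (bridges of $H$ other than $e$ survive deletion), and that is indeed the direction your chain of inequalities uses; but for the induction hypothesis to be applicable to $H - e$ at all you need the \emph{opposite} direction, $\bridge(H - e) \le \bridge(H) - 1$, i.e., that deleting a bridge cannot create new bridges. This is not automatic for edge deletion in general: deleting one edge of a triangle turns the other two edges into bridges, so ``fewer edges'' does not by itself give ``fewer bridges.'' It is true here precisely because $e$ is a bridge: $e$ lies on no cycle, so every cycle of $H$ survives intact in $H - e$; hence every non-bridge of $H$ (an edge lying on some cycle) still lies on that cycle in $H - e$ and remains a non-bridge. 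With that one line added you in fact get the equality $\bridge(H - e) = \bridge(H) - 1$, the induction is well-founded, and your proof is complete. (Alternatively, you could sidestep the issue by inducting on the number of edges rather than the number of bridges.)
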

\begin{proof}
    Let $H'$ be the graph obtained from $H$ by removing all bridges of $H$.
    Then, $\cc(H) + \bridge(H) = \cc(H') \le h$.
\qed\end{proof}

A {\em block} of a graph $G$ is a maximal vertex set $B$ of $G$ such that $G[B]$ is connected and has no cut vertices.
Note that a graph consisting of two vertices with at least one edge is a block.
It is easy to see that every block in a cactus forest is either a cycle, an edge, or an isolated vertex.
In particular, we call $B$ a {\em leaf block} if it has at most one cut vertex.
We say that vertices $v_1, \ldots, v_t \in V(B)$ are {\em consecutive} in $B$ if for each $1 \le i < t$, $v_i$ is adjacent to $v_{i+1}$ in $B$.

\paragraph{Iterative compression.}
Our algorithm employs the well-known {\em iterative compression} technique invented by Reed, Smith, and Vetta~\cite{ReedSV04:Finding}.
They gave an algorithm for {\sc Odd Cycle Transversal} based on this technique.
The essential idea can be generalized as follows.
Let $\mathcal C$ be a hereditary class of graphs, that is, for $G \in \mathcal C$, every induced subgraph of $G$ also belongs to $\mathcal C$.
The technique is widely used for designing algorithms of vertex deletion problems to hereditary classes of graphs.
The crux of the technique can be described as the following lemma.

\begin{lemma}[\cite{ReedSV04:Finding}]\label{lem:compression}
    Let $\mathcal C$ be a hereditary class of graphs.
    Given a graph $G = (V, E)$ and an integer $k$, the problem of computing $X \subseteq V$ with $|X| \le k$ such that $G[V \setminus X] \in \mathcal C$ can be solved in time $O^*((c + 1)^k)$ if one can solve the following problem in time $O^*(c^k)$:
    Given a subset $S \subseteq V$ of cardinality at most $k + 1$ with $G[V \setminus S] \in \mathcal C$, the problem asks to find $X \subseteq V \setminus S$ with $|X| \le k$ such that $G[V \setminus X] \in \mathcal C$.
\end{lemma}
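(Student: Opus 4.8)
The plan is to carry out the standard iterative-compression scheme of Reed, Smith, and Vetta. I would fix an ordering $v_1, \dots, v_n$ of $V$, set $V_i = \{v_1, \dots, v_i\}$ and $G_i = G[V_i]$, and process the nested prefixes $G_1 \subseteq \cdots \subseteq G_n = G$ one stage at a time. Throughout, I would maintain the invariant that after stage $i$ I hold a set $X_i \subseteq V_i$ with $|X_i| \le k$ and $G_i[V_i \setminus X_i] \in \mathcal C$, or else I have correctly concluded that $G_i$ has no such set and stopped. Since $\mathcal C$ is hereditary, any size-$\le k$ solution for $G$ restricts (via intersection with $V_i$) to a size-$\le k$ solution for each $G_i$; hence declaring a no-instance as soon as some stage is infeasible is justified.

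The engine of the induction is the observation that, starting from the valid $X_0 = \emptyset$ (the empty graph lies in the nonempty hereditary class $\mathcal C$), the set $X_{i-1} \cup \{v_i\}$ is a feasible deletion set for $G_i$ of size at most $k+1$. If its size is at most $k$, I keep it as $X_i$. Otherwise it has exactly $k+1$ vertices, and I set $S := X_{i-1} \cup \{v_i\}$; then $|S| = k+1$ and $G_i[V_i \setminus S] = G_{i-1}[V_{i-1} \setminus X_{i-1}] \in \mathcal C$, so $S$ is precisely the input the compression subroutine expects. To manufacture a solution of size $\le k$ that is \emph{allowed} to delete vertices of $S$, I would branch over every subset $Z \subseteq S$, reading $Z$ as the part of $S$ the new solution deletes and $S \setminus Z$ as the part it keeps. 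For each $Z$ I first test the necessary condition $G_i[S \setminus Z] \in \mathcal C$ (skipping $Z$ if it fails), and then call the compression algorithm on $G_i - Z$ with undeletable set $S \setminus Z$ and budget $k - |Z|$; here $|S \setminus Z| = (k - |Z|) + 1$, so the precondition on the set size is met with equality. If a branch returns $X''$, I output $X_i := Z \cup X''$.

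Correctness needs both directions. Any returned set satisfies $X'' \subseteq V_i \setminus S$ and $Z \subseteq S$, so $X_i$ is a disjoint union of size $|Z| + |X''| \le k$, and $G_i[V_i \setminus X_i] \in \mathcal C$ because deleting $Z$ and then $X''$ coincides with deleting $X_i$. Conversely, if $G_i$ admits any solution $X^*$ with $|X^*| \le k$, then the branch $Z = X^* \cap S$ passes the hereditary test (as $G_i[S \setminus Z]$ is an induced subgraph of $G_i[V_i \setminus X^*] \in \mathcal C$) and the corresponding compression call has the feasible disjoint solution $X^* \setminus S$ of size $k - |Z|$, so it succeeds; thus I find a size-$\le k$ solution whenever one exists.

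For the running time, the only super-polynomial work is the compression calls across the $2^{k+1}$ branches of a stage. Summing over $Z$ by its size $j$, stage $i$ costs $\sum_{j=0}^{k+1}\binom{k+1}{j}\,O^*(c^{k-j}) = O^*\!\bigl(c^{-1}(c+1)^{k+1}\bigr) = O^*((c+1)^k)$ by the binomial theorem (using that $c$ is a constant), and the $n$ stages contribute only a polynomial factor, giving the claimed $O^*((c+1)^k)$. I do not expect a genuine obstacle here; the one point requiring care—and the conceptual heart of the argument—is reconciling the fact that a true solution may delete vertices of $S$ with the subroutine's guarantee of producing only solutions disjoint from its undeletable set. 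The branch over $Z \subseteq S$ resolves exactly this, and the bookkeeping check that the reduced budget $k - |Z|$ and the reduced set size $|S \setminus Z|$ stay in the relation $|S \setminus Z| = (k-|Z|)+1$ is the spot where I would be most careful.
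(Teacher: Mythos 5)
Your proof is correct and is precisely the standard iterative-compression argument of Reed, Smith, and Vetta that the paper invokes by citation without reproving: process vertices one at a time, maintain a size-$\le k$ solution for the current prefix, and at each stage branch over the subset $Z$ of the size-$(k+1)$ set $S$ that the new solution may delete, calling the disjoint subroutine with budget $k-|Z|$, with the binomial-theorem sum giving $O^*((c+1)^k)$. Both the correctness argument (including the hereditary filter on $G_i[S \setminus Z]$ and the feasibility of $X^* \setminus S$ in the completeness direction) and the running-time accounting are sound.
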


For {\sc Cactus Vertex Deletion}, the latter problem is defined as follows.
\begin{definition}[{\sc Disjoint Cactus Vertex Deletion}]
    Given a graph $G = (V, E)$, an integer $k \ge 0$, and $S \subseteq V$ such that $G[V \setminus S]$ is a cactus forest, the problem asks to find a vertex set $X \subseteq V \setminus S$ with $|X| \le k$ whose removal leaves a cactus forest. 
\end{definition}

Let us note that we can assume that $G[S]$ is also a cactus forest as otherwise the problem is trivially infeasible.

\paragraph{Measure and conquer analysis.}
Our algorithm for {\sc Disjoint Cactus Vertex Deletion} is based on a standard branching algorithm with the measure and conquer analysis~\cite{FominGK09:measure}.
Given an instance $I$ of the problem, we define a measure $\mu(I)$ that is non-negative real and design a branching algorithm that generates subinstances $I_1, \ldots, I_t$ with $\mu(I) > \mu(I_i)$ for $1 \le i \le t$.
To measure the running time of the algorithm, we use a branching factor $(b_1, \ldots, b_t)$, where $\mu(I) - \mu(I_i) \ge b_i$ for each $i$.
It is known that the total running time of this branching algorithm is upper bounded by $O^*(c^{\mu(I)})$, where $c$ is the unique positive real root of equation
\begin{align*}
    x^{-b_1} + x^{-b_2} + \cdots + x^{-b_t} = 1,
\end{align*}
assuming that from any instance $I$ with $\mu(I) > 0$, its subinstances can be generated in polynomial time and for any instance $I$ with $\mu(I) = 0$, the problem can be solved in polynomial time.
We refer the reader to the book~\cite{FominD10:Exact} for a detailed exposition for the measure and conquer analysis.

\section{An improved algorithm for {\sc Disjoint Cactus Vertex Deletion}}

This section is devoted to developing an algorithm for {\sc Disjoint Cactus Vertex Deletion} that runs in time $O^*(16.64^k)$, proving Theorem~\ref{thm:CVD} by Lemma~\ref{lem:compression}. 
\begin{lemma}\label{lem:DCVD}
    Suppose that $|S| \le k + 1$.
    Then, {\sc Disjoint Cactus Vertex Deletion} can be solved in time $O^*(16.64^k)$.
\end{lemma}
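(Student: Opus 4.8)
The plan is to solve {\sc Disjoint Cactus Vertex Deletion} by a branching algorithm analyzed with measure and conquer, maintaining throughout the invariant that both $G[S]$ and $G[V \setminus S]$ are cactus forests. Following the hint in the introduction, I would take the measure
\[
  \mu = \alpha k + \beta \cdot \cc(G[S]) + \gamma \cdot \bridge(G[S]),
\]
for nonnegative constants $\alpha,\beta,\gamma$ to be fixed by the analysis, and base every branch on a single choice for a carefully selected $v \in V \setminus S$ adjacent to $S$: either \emph{delete} $v$ (put it into $X$) or \emph{keep} $v$ (move it into $S$). The two structural facts that drive the accounting are that deleting a vertex of $V \setminus S$ decreases $k$ by one and leaves $G[S]$ untouched, so the delete-branch drops $\mu$ by exactly $\alpha$; and that the keep-branch alters $\mu$ only through $\cc(G[S])$ and $\bridge(G[S])$. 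Since $|S| \le k+1$, \Cref{lem:cc+b} gives $\cc(G[S]) + \bridge(G[S]) \le |S| \le k+1$, so the initial measure satisfies $\mu \le (\alpha + \max\{\beta,\gamma\})k + \max\{\beta,\gamma\}$ and the running time is $O^*(c^{(\alpha+\max\{\beta,\gamma\})k})$, where $c$ is the worst branching factor. I would then tune the constants so that $c^{\alpha+\max\{\beta,\gamma\}} = 16.64$, which by \Cref{lem:compression} yields \Cref{thm:CVD}.

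Before branching I would apply reduction rules that eliminate vertices that can never help a minimal solution: any connected component of $G[V \setminus S]$ with no edge to $S$, any vertex of degree at most one, any smoothable degree-two chain internal to $V \setminus S$, and any vertex whose attachment to $S$ is a pendant block (a single edge, or a multiedge to one vertex of $S$) that can never lie on a cycle crossing into $S$. Keeping such a vertex is always safe, so discarding it does not change the answer, whereas a naive keep-branch on it would not decrease $\mu$. The goal of the reductions is to ensure that whenever $G$ is not already a cactus forest, there remains a vertex of $V \setminus S$ that admits a measure-decreasing branch; conversely, when no rule applies I would argue that every edge between $S$ and $V \setminus S$ is a bridge of $G$, so $G$ is a cactus forest and we may return $X = \emptyset$.

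The heart of the improvement lies in the keep-branch analysis, which I would split according to $N(v) \cap S$ and the number of components of $G[S]$ that $v$ touches. The crucial, and new, case is when $v$ has at least two neighbors in a single component $C$ of $G[S]$: in the keep-branch either $G[S \cup \{v\}]$ fails to be a cactus forest, in which case $v$ is forced to be deleted and there is no branching, or it remains a cactus forest, in which case the two edges from $v$ into $C$ close a cycle through a path of $C$ that, by the cactus property, consisted solely of bridges; hence at least one bridge of $G[S]$ is destroyed while $\cc(G[S])$ is unchanged, and $\mu$ drops by at least $\gamma$. This is the situation in which the plain measure $k + \cc(G[S])$ of \cite{BonnetBKM16:Parameterized} stalls, and it gives the branching vector $(\alpha, \gamma)$, whose factor is the positive root of
\[
  x^{-\alpha} + x^{-\gamma} = 1.
\]
When instead $v$ joins $t \ge 2$ distinct components of $G[S]$, the keep-branch lowers $\cc(G[S])$ by $t-1$ while creating at most $t$ new bridges, so $\mu$ drops by at least $\beta(t-1) - \gamma t$, which is positive provided $\beta$ is chosen sufficiently larger than $\gamma$ (roughly $\beta > 2\gamma$ already for $t=2$). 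The remaining configurations, in which a cycle crosses the $S$-boundary but no single vertex has two neighbors in one component, are handled by branching over the bounded set of boundary vertices lying on that crossing cycle, each deletion contributing a drop of $\alpha$.

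The hard part, and the two obstacles I expect, are as follows. First, one must guarantee that \emph{every} branch strictly decreases $\mu$; the delicate tension is that merging components lowers $\cc(G[S])$ but raises $\bridge(G[S])$, so $\beta$ and $\gamma$ must be balanced, yet simultaneously the hard-case vector $(\alpha,\gamma)$ must not dominate and the product $c^{\alpha+\max\{\beta,\gamma\}}$ must stay at most $16.64$. Second, one must establish completeness of the rule set: after the reductions, every vertex adjacent to $S$ (or every crossing cycle) must fall into one of the branchable configurations unless $G$ is already a cactus forest, and one must bound the arity of the delete-only branches so that they do not blow up the factor. The final step would be a careful numerical optimization of $(\alpha,\beta,\gamma)$ against the complete list of branching vectors, certifying that the worst factor gives $c^{\alpha+\max\{\beta,\gamma\}} \le 16.64$ and hence the claimed bound.
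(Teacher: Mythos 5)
Your measure, your reduction rules, and your single-vertex delete/keep branch for a vertex with at least two neighbors in $S$ all match the paper (the latter is exactly Branching rule~\ref{br:v-has-two-neighbors-in-two-components}, with the same $(\alpha,\min(\beta-2\gamma,\gamma))$ accounting). But there is a genuine gap in what you do after that rule is exhausted, i.e., once every vertex of $V \setminus S$ has at most one neighbor in $S$ and at least two in $V \setminus S$ (Assumption~\ref{assume:deg-3}). This is the main case, and single-vertex branching is useless there: moving a vertex with at most one $S$-neighbor into $S$ \emph{increases} the measure (by $\gamma$ or $\beta$). Your substitute --- ``branching over the bounded set of boundary vertices lying on that crossing cycle, each deletion contributing a drop of $\alpha$'' --- is not a valid rule. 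A cycle crossing the $S$-boundary is not an obstruction (cactus forests may contain such cycles), so no vertex of it need be deleted and a deletion-only branch is unsafe; and when the crossing structure \emph{is} an obstruction, its intersection with $V \setminus S$ can be arbitrarily large, so the arity of a delete-only branch is unbounded. Relatedly, your terminal claim that ``when no rule applies, every $S$--$(V\setminus S)$ edge is a bridge'' is false: take a leaf block (cycle) of $G[V \setminus S]$ in which two vertices each have a single neighbor in the same component of $G[S]$; none of your rules applies, yet the boundary edges lie on a crossing cycle and the graph contains a theta-graph obstruction.

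The missing idea, which is the heart of the paper's proof, is to branch on \emph{three consecutive non-cut vertices} $u,v,w$ of a leaf block of $G[V \setminus S]$. Under Assumption~\ref{assume:deg-3} each such vertex has exactly one neighbor in $S$ (it has two neighbors inside the block and needs a third), so the leaf block supplies the needed configuration for free. If the three $S$-neighbors lie in one component of $G[S]$, an obstruction is forced and a safe three-way delete branch $(\alpha,\alpha,\alpha)$ suffices (Branching rules~\ref{br:k3-single-comp} and \ref{br:non-k3-single-comp}); otherwise one branches four ways --- delete $u$, delete $v$, delete $w$, or move \emph{all three} into $S$ --- and the keep-branch now decreases the measure because the path $u\mbox{-}v\mbox{-}w$ merges at least two components of $G[S]$, yielding factors such as $(\alpha,\alpha,\alpha,\beta-2\gamma)$ and $(\alpha,\alpha,\alpha,2\beta-5\gamma)$ (Branching rules~\ref{br:k3-distinct-comp} and \ref{br:non-k3-distinct-comp}). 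It is this multi-vertex keep-branch, not any cycle-based delete branch, that makes the measure strictly decrease in every case; with $\alpha=1$, $\beta=0.4052$, $\gamma=0.0726$ the worst vector is $(1,1,1,0.26)$, giving $c<7.3961$ and $c^{1.4052}\le 16.64$. Without this (or an equivalent) rule, your algorithm has irreducible, non-branchable instances, so the proposed proof does not go through.
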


Let $I = (G, S, k)$ be an instance of {\sc Disjoint Cactus Vertex Deletion}, where $G = (V, E)$ is a multigraph, $S \subseteq V$.
Recall that we assume $G[V \setminus S]$ and $G[S]$ are both cactus forests as otherwise the problem is trivially infeasible.

Let $\mu(I) = \alpha\cdot k + \beta\cdot\cc(G[S]) + \gamma\cdot\bridge(G[S])$, where $\alpha,\beta,\gamma$ are chosen later.
In the following, we assume that $\beta \ge \gamma$.
For the sake of simplicity, we write, for $X \subseteq V$, $\cc(X)$ and $\bridge(X)$ to denote $\cc(G[X])$ and $\bridge(G[X])$, respectively.

As $G$ may have multiedges, every cactus forest can be characterized as the following form.
\begin{proposition}[\cite{FioriniJP10:Hitting}]
    Let $D$ be the graph of two vertices and three parallel edges between them.
    A graph is a cactus forest if and only if it does not contain a subgraph isomorphic to any subdivision of $D$.
\end{proposition}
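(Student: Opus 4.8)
The plan is to prove the two implications separately, both by contrapositive, using the fact (immediate from the definitions in the Introduction) that a graph is a cactus forest if and only if every edge lies on at most one cycle. Throughout I identify a subdivision of $D$ with a \emph{theta subgraph}: two vertices $a,b$ joined by three internally vertex-disjoint paths $P_1,P_2,P_3$, each of length at least one.

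For the forward direction I would show that a theta subgraph destroys the cactus property. Given such $P_1,P_2,P_3$, any edge $f$ of $P_1$ lies on both cycles $P_1\cup P_2$ and $P_1\cup P_3$, and these are distinct because $P_2\neq P_3$. Hence $f$ lies on two distinct cycles, so $G$ is not a cactus forest; equivalently, a cactus forest contains no subgraph isomorphic to a subdivision of $D$.

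For the converse I would assume $G$ is not a cactus forest and build a theta subgraph. Then some edge lies on two distinct cycles, so I may pass to an edge-minimal subgraph $H\subseteq G$ still containing an edge $e$ on two distinct cycles $C_1\neq C_2$; minimality forces $H=C_1\cup C_2$. Let $Q$ be the maximal common sub-path of $C_1$ and $C_2$ through $e$, with endpoints $a\neq b$, and let $A_i$ be the $a$–$b$ arc of $C_i$ other than $Q$. Each $A_i$ is nonempty because the path $Q$ is a proper subgraph of the cycle $C_i$, and maximality of $Q$ guarantees that $A_1$ and $A_2$ leave $a$ along distinct edges. The three arcs $Q,A_1,A_2$ are $a$–$b$ paths with $Q$ internally disjoint from each $A_i$, so they form a theta subgraph the moment $A_1$ and $A_2$ are internally disjoint.

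The only genuine obstacle is that $C_1$ and $C_2$ may cross again, so $A_1$ and $A_2$ could share an internal vertex. I would remove this using minimality: if $w$ is the first internal vertex of $A_1$ (traversed from $a$) lying on $A_2$, then $A_1[a,w]$, $A_2[a,w]$, and $Q$ followed by $A_2[b,w]$ are three internally vertex-disjoint $a$–$w$ paths of length at least one, i.e.\ a theta subgraph; it omits the edges of $A_1$ beyond $w$ and hence has strictly fewer edges than $H$. By the forward direction this theta is itself not a cactus forest, contradicting the edge-minimality of $H$. Therefore no such crossing occurs, $A_1$ and $A_2$ are internally disjoint, and $H=Q\cup A_1\cup A_2$ is a subdivision of $D$. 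I expect the bookkeeping in this rerouting step to be the only delicate point; everything else is direct.
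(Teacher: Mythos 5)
Your proposal is correct, but there is no in-paper argument to compare it against: the paper states this proposition with a citation to Fiorini, Joret, and Pietropaoli~\cite{FioriniJP10:Hitting} and gives no proof, so your self-contained argument is necessarily its own route. On its merits it is sound. The forward direction is immediate, as you say, once one notes that two internally disjoint $a$--$b$ paths of length one are distinct parallel edges (so the multigraph case, where $D$ itself is the obstruction, is covered). In the converse, the two load-bearing points are both present: the choice of $w$ as the \emph{first} internal vertex of $A_1$ lying on $A_2$ gives internal disjointness of the three $a$--$w$ paths, and the claim that maximality of $Q$ forces $A_1$ and $A_2$ to leave $a$ along distinct edges is exactly what prevents $A_1[a,w]$ and $A_2[a,w]$ from degenerating to the \emph{same} single edge $\{a,w\}$; that claim deserves its one-line justification (if both arcs left $a$ along a common edge $f$, then either $Q\cup f$ is a longer common path, contradicting maximality, or $f$ joins $a$ to $b$ and then $C_1=Q\cup f=C_2$). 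One simplification: the closing appeal to edge-minimality of $H$ is unnecessary and slightly roundabout. In the crossing case you have already exhibited three internally vertex-disjoint $a$--$w$ paths of length at least one inside $H\subseteq G$, i.e.\ a subdivision of $D$ in $G$, which is all the contrapositive requires; you may conclude there, and indeed the whole minimality device can be dropped, since starting from any edge $e$ on two distinct cycles, your construction of $Q$, $A_1$, $A_2$ and the rerouting at the first crossing already produces the theta directly.
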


We call a subdivision of $D$ an {\em obstruction}.
In particular, $D$ itself is also an obstruction.

The algorithm consists of several branching rules and reduction rules.
We say that a reduction rule is {\em safe} if the original instance has a yes-instance if and only if the instance obtained by applying the rule is a yes-instance.
We also say that a branching rule is {\em safe} if the original instance is a yes-instance if and only if at least one of the instances obtained by applying the rule is a yes-instance.
Our algorithm described below {\em determines} whether $G$ has a solution $X$ for {\sc Disjoint Cactus Vertex Deletion}.
However, the algorithm easily turns into one that {\em finds} an actual solution if the answer is affirmative.
We apply these rules in the order of their appearance.
The algorithm terminates if $V(G) = S$ or $k = 0$, and it answers ``YES'' if and only if $k \ge 0$ and $G$ is a cactus forest.

The following reduction and branching rules are trivially safe.
\begin{reduce}\label{rr:isolated-component}
    If $G[V \setminus S]$ contains a component $C$ that has no neighbors in $S$, then delete all the vertices in $C$.
\end{reduce}
\begin{reduce}\label{rr:deg-1}
    If $G[V \setminus S]$ contains a vertex of degree one in $G$, then delete it.
\end{reduce}
\begin{reduce}\label{rr:must-be-in-solution}
    If $G[V \setminus S]$ contains a vertex $v$ such that $G[S \cup \{v\}]$ is not a cactus forest, then delete $v$ and decrease $k$ by one.
\end{reduce}
\begin{branch}\label{br:must-be-in-solution}
    If $G[V \setminus S]$ contains vertices $u, v \in V \setminus S$ with $m(u, v) \ge 3$, branch into two cases: (1) delete $u$ and decrease $k$ by one; (2) delete $v$ and decrease $k$ by one.
\end{branch}

The branching factor of Branching rule~\ref{br:must-be-in-solution} is $(\alpha, \alpha)$.
By applying these rules, we make the following assumption on each vertex in $V \setminus S$.
\begin{assume}
    Every vertex $v \in V \setminus S$ has degree at least two in $G$ and there are at most two edges between two vertices.
\end{assume}

As $G$ is a multigraph, some vertex may have only one neighbor even if its degree is greater than one.
If $G[V \setminus S]$ contains a vertex $v$ with $|N_G(v)| = 1$, this vertex also can be removed since it is not a part of an obstruction, assuming that $m(u, v) \le 2$ with $u \in N_G(v)$.
This implies the following reduction rule.

\begin{reduce}\label{rr:neighbor-1}
    If $G[V \setminus S]$ contains a vertex $v$ with $|N_G(v)| = 1$, then delete it.
\end{reduce}
Thus, we further make the following assumption on each vertex in $V \setminus S$.

\begin{assume}\label{assume:al2}
    Every vertex $v \in V \setminus S$ has at least two neighbors in $G$.
\end{assume}

Suppose that there is a vertex $v \in V \setminus S$ that has at least two neighbors in $S$.
By Reduction rule~\ref{rr:must-be-in-solution}, there is no component in $G[S]$ that contains at least three vertices of $N_G(v) \cap S$.
Let $W = N_G(v) \cap S$.
We denote by $t_1$ (resp. by $t_2$) the number of components in $G[S]$ that contain exactly one vertex (resp. two vertices) of $W$.
Let $C$ be a component in $G[S]$ that has at least one vertex of $W$.
If $|W \cap C| = 2$, say $w, w' \in W \cap C$, every edge on the path between $w$ and $w'$ in $G[C]$ is a bridge as otherwise $G[C \cup \{v\}]$ contains an obstruction, which implies that $v$ is removed by Reduction rule~\ref{rr:must-be-in-solution}. 
Then, there is at least one bridge on the path between $w$ and $w'$ in $G[C]$.
Thus, $\bridge(C \cup \{v\}) \le \bridge(C) - 1$.
If $|W \cap C| = 1$, $G[C \cup \{v\}]$ has $\bridge(C) + 1$ bridges.
Hence, we have
\begin{align*}
    \beta \cdot \cc(S \cup \{v\}) &\le \beta \cdot \cc(S) - \beta(t_1 + t_2 - 1), \\
    \gamma \cdot \bridge(S \cup \{v\}) & \le \gamma \cdot\bridge(S) + \gamma(t_1 - t_2).
\end{align*}

Consider the value $t_1(\beta - \gamma) + t_2(\beta + \gamma) - \beta$, that is, a lower bound of $\mu((G, S, k)) - \mu((G, S \cup \{v\}, k))$.
If $t_1 + t_2 \ge 2$, the value is at least $\beta - 2\gamma$.
This follows from the fact that the value is minimized when $t_1 = 2$ and $t_2 = 0$ under $\beta \ge \gamma \ge 0$.
If $t_1 + t_2 = 1$, $t_1$ must be zero since $|W| \ge 2$.
In this case, the value is at least $\gamma$.
This implies the following branching rule, which is clearly safe, has branching factor $(\alpha, \min(\beta - 2\gamma, \gamma))$.

\begin{branch}\label{br:v-has-two-neighbors-in-two-components}
    Suppose $G[V \setminus S]$ contains a vertex $v$ that has at least two neighbors in $S$.
    Then, branch into two cases: (1) delete $v$ and decrease $k$ by one; (2) put $v$ into $S$.
\end{branch}

Thus, we make the following assumption on each vertex in $V \setminus S$.

\begin{assume}\label{assume:at-most-1-in-S}
    Every vertex $v \in V \setminus S$ has at least two neighbors in $G$ and at most one of them belongs to $S$.
\end{assume}

We can remove a vertex having exactly two neighbors by adding an edge between its neighbors.
The following lemma justifies this reduction.

\begin{lemma}\label{lem:eliminate-deg2}
    Let $v \in V \setminus S$ be a vertex with exactly two neighbors $u, w$ in $G$.
    Suppose that $p = \max(m(u, v), m(v, w)) \le 2$.
    Let $G'$ be the graph obtained from $G$ by deleting $v$ and adding $p$ parallel edges between $u$ and $w$.
    Then, $G$ has a cactus deletion set of size at most $k$ if and only if $G'$ has a cactus deletion set of size at most $k$.
\end{lemma}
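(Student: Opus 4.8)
The plan is to prove the equivalence through the obstruction characterization: a multigraph is a cactus forest if and only if it contains no \emph{obstruction}, i.e., no two vertices joined by three internally vertex-disjoint paths (a parallel edge counting as a length-one path). Write $V' = V \setminus \{v\}$ for the vertex set of $G'$. Because $v \notin S$, passing from $G$ to $G'$ changes neither $S$ nor the constraint $X \subseteq V \setminus S$, so it suffices to match solutions of the two graphs. I would first dispose of the cases in which a solution touches $v$, $u$, or $w$, reducing everything to a single core statement. If a solution $X$ for $G$ contains $v$, I would replace $v$ by whichever of $u, w$ lies outside $S$ — by Assumption~\ref{assume:at-most-1-in-S} at least one of them does — which keeps the size at most $k$ and makes $G' - X'$ an induced subgraph of $G - X$, hence a cactus forest by heredity. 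If instead $v \notin X$ but $u \in X$ or $w \in X$, then deleting that neighbor detaches $v$ into a pendant vertex joined to a single vertex by at most two parallel edges, so $G' - X$ and $G - X$ differ only by removing (or adding) such a pendant, an operation that can neither create nor destroy an obstruction; the reverse direction is symmetric, and any solution $X'$ for $G'$ automatically avoids $v$.

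The crux is then the core claim: for $X \subseteq V \setminus S$ with $u, v, w \notin X$, the graph $H := G - X$ contains an obstruction if and only if $H' := G' - X$ does. The graph $H - v$, which equals $H'$ minus the $p$ new parallel $u$-$w$ edges, is common to both, so any obstruction of $H$ avoiding $v$, respectively any obstruction of $H'$ avoiding the new parallel edges, transfers verbatim. For an obstruction in $H$ that uses $v$ as an internal (degree-two) vertex of one of its three paths, that path must traverse $u - v - w$, and replacing this segment by a single new $u$-$w$ edge yields an obstruction in $H'$; conversely, an obstruction of $H'$ using exactly one new $u$-$w$ edge lifts back by subdividing it through $v$.

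The main obstacle — and the point at which the choice $p = \max(m(u,v), m(v,w))$ is forced — is the remaining case in which $v$ is a \emph{branch} (degree-three) vertex of the obstruction in $H$, equivalently the case in which the obstruction in $H'$ uses two of the new parallel $u$-$w$ edges. Such a configuration can occur only when $m(u,v) = 2$ or $m(v,w) = 2$, that is, exactly when $p = 2$. I would show that in this situation the double edge on one side, say the two $u$-$v$ edges, together with the single $v$-$w$ edge and the third $u$-$w$ path $R$ of the obstruction (which lives in $H - v$), forms three internally vertex-disjoint $u$-$v$ paths in $H$; while in $H'$ the same $R$ together with the two parallel $u$-$w$ edges forms three internally vertex-disjoint $u$-$w$ paths. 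This puts the two obstructions in correspondence, and it is precisely here that $p$ must equal the maximum of the two multiplicities: a smaller value would leave too few parallel edges in $H'$ to realise the obstruction arising from the double edge of $H$. Finally, the hypothesis $p \le 2$ guarantees that $G'$ remains a multigraph with at most double edges, keeping the whole argument within the obstruction description used throughout.
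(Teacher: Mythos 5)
Your proof is correct and takes essentially the same route as the paper's: both normalize the solution to one avoiding $v$ (using that every obstruction through $v$ must contain both $u$ and $w$, and Assumption~\ref{assume:at-most-1-in-S} to keep the replacement inside $V \setminus S$) and then transfer solutions between $G$ and $G'$ via the obstruction characterization. The paper compresses the transfer step into ``and vice versa,'' while you spell out the internal-vertex versus branch-vertex case analysis that forces $p = \max(m(u,v), m(v,w))$; that is added detail, not a different approach.
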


\begin{proof}
    Since every obstruction in $G$ containing $v$ also has both $u$ and $w$, there is a smallest cactus deletion set $X$ that does not contain $v$.
    Such a set is also a cactus deletion set of $G'$ and vise versa.
\end{proof}

By Lemma~\ref{lem:eliminate-deg2}, the following reduction rule is safe.

\begin{reduce}\label{rr:deg-2}
    Suppose that $G[V \setminus S]$ contains a vertex $v$ with $N_G(v) = \{u, w\}$.
    Then delete $v$ and add $\max(m(u, v), m(v, w))$ parallel edges between $u$ and $w$.
\end{reduce}

This implies that the following assumption is made.

\begin{assume}\label{assume:deg-3}
    Every vertex $v \in V \setminus S$ has at least three neighbors.
    Moreover, at most one of them belongs to $S$ and hence at least two of them belong to $V \setminus S$.
\end{assume}

Since $G[V \setminus S]$ is a cactus forest, there is a leaf block $B$.
By Assumption~\ref{assume:deg-3}, $B$ contains at least three vertices.
Suppose that $B$ has exactly three vertices $u, v, w$.
As $B$ is a leaf block, we can assume that both $u$ and $v$ are not cut vertices of $G[V \setminus S]$.
By Assumption~\ref{assume:at-most-1-in-S}, both $u$ and $v$ have exactly one neighbor in $S$, which can be an identical vertex.
If there is a component $C$ in $G[S]$ that contains both a neighbor of $u$ and a neighbor of $v$, then $G[C \cup \{u, v, w\}]$ has an obstruction, which yields the following branching rule with branching factor $(\alpha, \alpha, \alpha)$.

\begin{branch}\label{br:k3-single-comp}
    Suppose that there is a leaf block $B$ with $V(B) = \{u, v, w\}$ in $G[V \setminus S]$.
    Suppose moreover that each of $u$ and $v$ has exactly one neighbor in $S$ and that these neighbors belong to a single component in $G[S]$.
    Then, branch into three cases: (1) delete $u$; (2) delete $v$; (3) delete $w$. 
    For each case, decrease $k$ by one.
\end{branch}

Otherwise, the neighbors of $u$ and $v$ belong to distinct components in $G[S]$.
Let $C_u$ and $C_v$ be the components of $G[S]$ that have neighbors of $u$ and $v$, respectively. 
If $w$ has a neighbor in $C_u$ or $C_v$, then $G[S \cup \{u, v, w\}]$ contains an obstruction.
In this case, we apply Branching rule~\ref{br:k3-single-comp} as well.
Thus, either $w$ has no neighbor in $S$ or $w$ has exactly one neighbor in a component $C_w$ in $G[S]$ with $C_w \neq C_u$ and $C_w \neq C_v$.
In both cases, we apply the following branching rule.

\begin{branch}\label{br:k3-distinct-comp}
    Suppose that there is a leaf block $B$ with $V(B) = \{u, v, w\}$ in $G[V \setminus S]$.
    Suppose moreover that each of $u$ and $v$ has exactly one neighbor in $S$ and that these neighbors belong to distinct components in $G[S]$.
    Then, branch into four cases: (1) delete $u$; (2) delete $v$; (3) delete $w$; (4) put $u$, $v$, and $w$ into $S$. 
    For (1), (2), and (3), decrease $k$ by one.
\end{branch}

To see the branching factor of this rule, suppose first that $w$ has no neighbor in $S$.
Then, $G[S \cup \{u, v, w\}]$ contains $\cc(S) - 1$ components and $\bridge(S) + 2$ bridges.
Thus, Branching~rule~\ref{br:k3-distinct-comp} has branching factor $(\alpha, \alpha, \alpha, \beta - 2\gamma)$.
Suppose otherwise that $w$ has an exactly one neighbor in a component $C_w$ in $G[S]$ with $C_w \neq C_u$ and $C_w \neq C_v$.
Then, $G[S \cup \{u, v, w\}]$ contains $\cc(S) - 2$ components and $\bridge(S) + 3$ bridges.
Thus, Branching~rule~\ref{br:k3-distinct-comp} has branching factor $(\alpha, \alpha, \alpha, 2\beta - 3\gamma)$.

By Branching rules~\ref{br:k3-single-comp} and \ref{br:k3-distinct-comp}, the following assumption is made.

\begin{assume}\label{assume:leaf-block-size-4}
    Every leaf block $B$ in $G[V \setminus S]$ contains three consecutive vertices, each of which is not a cut vertex in $G[V \setminus S]$ and has exactly one neighbor in $S$.
\end{assume}

Let $u, v, w$ be three consecutive vertices in $B$, each of which is not a cut vertex in $G[V \setminus S]$ and has exactly one neighbor in $S$.
Let $u', v', w'$ be the neighbors of $u, v, w$ in $S$, respectively.
There are four cases~(Figure~\ref{fig:c4}).

\begin{figure*}
    \centering
    \includegraphics[width=\textwidth]{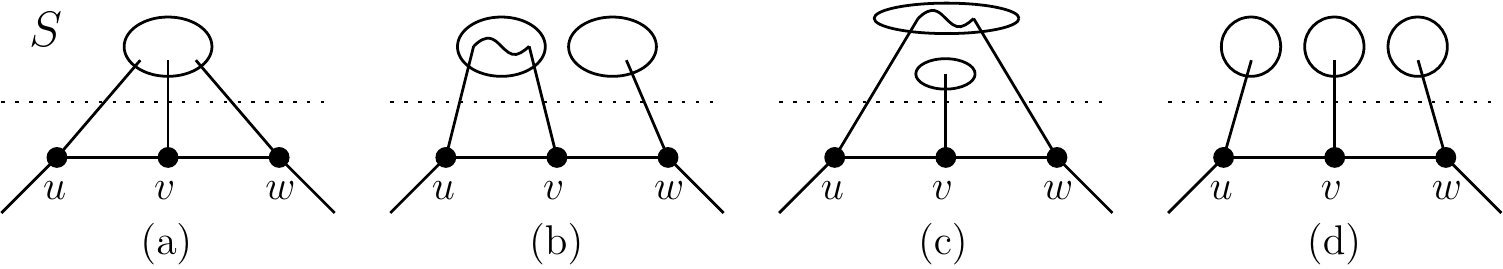}
    \caption{An Illustration of four branching cases under Assumption~\ref{assume:leaf-block-size-4}.}
    \label{fig:c4}
\end{figure*}

Suppose that there is a component $C$ in $G[S]$ that contains these neighbors ((a) in Figure~\ref{fig:c4}).
Then, $G[C \cup \{u, v, w\}]$ has an obstruction, yielding the following Branching rule~\ref{br:non-k3-single-comp} that has branching factor $(\alpha, \alpha, \alpha)$.

\begin{branch}\label{br:non-k3-single-comp}
    Suppose that there is a leaf block $B$ with $|V(B)| \ge 4$ in $G[V \setminus S]$.
    Let $u, v, w$ be three consecutive vertices in $B$, each of which is not a cut vertex in $G[V \setminus S]$ and has exactly one neighbor in $S$.
    Suppose that these neighbors belong to a single component in $G[S]$.
    Then, branch into three cases: (1) delete $u$; (2) delete $v$; (3) delete $w$. 
    For each case, decrease $k$ by one.
\end{branch}

Suppose next that exactly two of $u', v', w'$ are contained in a single component $C$ in $G[S]$.
There are essentially two cases: (1) $u' $ and $v'$ are contained in $C$ ((b) in Figure~\ref{fig:c4}) or (2) $u'$ and $w'$ are contained in $C$ ((c) in Figure~\ref{fig:c4}).
In case (1), $G[S \cup \{u, v, w\}]$ contains $\cc(S) - 1$ components and $\bridge(S) + 2$ bridges.
In case (2), $G[S \cup \{u, v, w\}]$ contains $\cc(S) - 1$ components and $\bridge(S) + 1$ bridges.
For these cases, we apply the following Branching~rule~\ref{br:non-k3-distinct-comp}, which has branching factors $(\alpha, \alpha, \alpha, \beta - 2\gamma)$ and $(\alpha, \alpha, \alpha, \beta - \gamma)$ for these cases.

\begin{branch}\label{br:non-k3-distinct-comp}
    Suppose that there is a leaf block $B$ with $|V(B)| \ge 4$ in $G[V \setminus S]$.
    Let $u, v, w$ be three consecutive vertices in $B$, each of which is not a cut vertex in $G[V \setminus S]$ and has exactly one neighbor in $S$.
    Suppose that these neighbors are not contained in a single component in $G[S]$.
    Then, branch into four cases: (1) delete $u$; (2) delete $v$; (3) delete $w$; (4) put $u$, $v$, and $w$ into $S$. 
    For (1), (2), and (3), decrease $k$ by one.
\end{branch}

Finally, suppose any two of $u', v', w'$ are not contained in a single component in $G[S]$ ((d) in Figure~\ref{fig:c4}).
Again, we apply Branching~rule~\ref{br:non-k3-distinct-comp} to this case.
Since $G[S \cup \{u, v, w\}]$ contains $\cc(S) - 2$ components and $\bridge(S) + 5$ bridges, Branching~rule~\ref{br:non-k3-distinct-comp} has branching factor $(\alpha, \alpha, \alpha, 2\beta - 5\gamma)$.
The entire algorithm for {\sc Disjoint Cactus Vertex Deletion} is given in Algorithm~\ref{alg:code}.

The reduction and branching rules cover all cases for the instance $I$ and all the rules are safe.
Thus, the algorithm correctly computes a cactus deletion set $X \subseteq V \setminus S$ with $|X| \le k$ if it exists.
By choosing $\alpha = 1$, $\beta = 0.4052$, $\gamma = 0.0726$, the running time is dominated by the branching factor $(\alpha, \alpha, \alpha, \beta - 2\gamma) = (1, 1, 1, 0.26)$.
By Lemma~\ref{lem:cc+b}, we have $\beta \cdot \cc(S) + \gamma \cdot \bridge(S) \le \beta (k + 1)$. 
Therefore, the running time of the algorithm is
\begin{align*}
    O^*(c^{\mu(I)}) &\subseteq O^*(c^{\alpha \cdot k + \beta \cdot \cc(S) + \gamma \cdot \bridge(S)}) \\
    &\subseteq O^*(c^{1.4052k}), 
\end{align*}
where $c < 7.3961$ is the unique positive real root of equation $3x^{-1} + x^{-0.26} = 1$.
This yields the running time bound $O^*(16.64^k)$ for {\sc Disjoint Cactus Vertex Deletion}.

\begin{algorithm}[h]
\caption{A pseudocode of the algorithm for {\sc Disjoint Cactus Vertex Deletion}}\label{alg:code}
\begin{algorithmic}
    \Procedure{\tt DCVD}{$G = (V, E), S, k$}
    \If{$k \ge 0$ and $V = S$}
        \State {\bf return} {\bf true}
    \EndIf
    \If{$k < 0$}
        \State {\bf return} {\bf false}
    \EndIf
    \If{$G[V \setminus S]$ has a component $C$ that has no neighbors in $S$} \Comment{Reduction~rule~\ref{rr:isolated-component}}
        \State {\bf return} {\tt DCVD}$(G[V \setminus C], S, k)$ 
    \EndIf
    \If{$G[V \setminus S]$ has a vertex $v$ of degree one in $G$} \Comment{Reduction~rule~\ref{rr:deg-1}}
        \State {\bf return} {\tt DCVD}$(G[V \setminus \{v\}], S, k)$
    \EndIf
    \If{$G[V \setminus S]$ has $v$ such that $G[V \cup \{v\}]$ is not a cactus forest} \Comment{Reduction~rule~\ref{rr:must-be-in-solution}}
        \State {\bf return} {\tt DCVD}$(G[V \setminus \{v\}], S, k - 1)$
    \EndIf
    \If{$G[V \setminus S]$ has vertices $u$ and $v$ with $m(u, v) \ge 3$} \Comment{Branching~rule~\ref{br:must-be-in-solution}}
        \State {\bf return} {\tt DCVD}$(G[V \setminus \{u\}], S, k - 1) \lor $ {\tt DCVD}$(G[V \setminus \{v\}], S, k - 1)$ 
    \EndIf
    \If{$G[V \setminus S]$ has a vertex $v$ with $|N_G(v)| = 1$} \Comment{Reduction~rule~\ref{rr:neighbor-1}}
        \State {\bf return} {\tt DCVD}$(G[V \setminus \{v\}], S, k)$ 
    \EndIf
    \If{$G[V \setminus S]$ has a vertex $v$ having at least two neighbors in $S$} \Comment{Branching~rule~\ref{br:v-has-two-neighbors-in-two-components}}
        \State {\bf return} {\tt DCVD}$(G[V \setminus \{v\}], S, k - 1) \lor $ {\tt DCVD}$(G[V], S \cup \{v\}, k)$ 
    \EndIf
    \If{$G[V \setminus S]$ has a vertex $v$ with $N_G(v) = \{u, w\}$} \Comment{Reduction~rule~\ref{rr:deg-2}}
        \State Let $G' = G[V \setminus \{v\}]$.
        \State Add $\max\{m(u, v), m(v, w)\}$ parallel edges between $u$ and $w$ to $G'$.
        \State {\bf return} {\tt DCVD}$(G', S, k)$
    \EndIf
     \If{$G[V \setminus S]$ has a leaf block $B$ with $V(B) = \{u, v, w\}$} \Comment{Branching~rules~\ref{br:k3-single-comp} and \ref{br:k3-distinct-comp}}
        \For{$x \in V(B)$}
            \If{{\tt DCVD}$(G[V \setminus \{x\}], S, k - 1)$}
                \State{\bf return true} 
            \EndIf
        \EndFor
        \If{$G[S \cup V(B)]$ is a cactus forest}
            \State {\bf return} {\tt DCVD}$(G[V], S \cup V(B), k)$
        \EndIf
        \State {\bf return false}
    \EndIf
    \If{$G[V \setminus S]$ has a leaf block $B$ with $|V(B)| \ge 4$} \Comment{Branching rules \ref{br:non-k3-single-comp} and \ref{br:non-k3-distinct-comp}}
        \State Let $B' = \{u, v, w\}$ be consecutive vertices in $B$ that are not cut vertices in $G[V \setminus S]$.
        \For{$x \in V(B')$}
            \If{{\tt DCVD}$(G[V \setminus \{x\}], S, k - 1)$}
                \State {\bf return true} 
            \EndIf
        \EndFor
        \If{$G[S \cup V(B')]$ is a cactus forest}
            \State {\bf return} {\tt DCVD}$(G[V], S \cup V(B'), k)$
        \EndIf
        \State {\bf return false}
    \EndIf
    \EndProcedure
\end{algorithmic}
\end{algorithm}

\section{An improved algorithm for {\sc Even Cycle Transversal}}

Recall that {\sc Even Cycle Transversal} asks whether, given a graph $G = (V, E)$ and an integer $k$, $G$ has a vertex set $X$ of size at most $k$ such that $G[V \setminus X]$ is a forest of odd cacti. 
As in the previous section, we solve the disjoint version of {\sc Even Cycle Transversal} and give an $O(16.64^k)$-time algorithm for it, assuming that $|S| \le k + 1$.
\begin{definition}[{\sc Disjoint Even Cycle Transversal}]
    Given a graph $G = (V, E)$, an integer $k \ge 0$, and $S \subseteq V$ such that $G[V \setminus S]$ is a forest of odd cacti, the problem asks to find a vertex set $X \subseteq V \setminus S$ with $|X| \le k$ whose removal leaves a forest of odd cacti.
\end{definition}

A key difference from {\sc Disjoint Cactus Vertex Deletion} is that we need to take the length of cycles into account.
However, in Reduction~rule~\ref{rr:deg-2}, we replace (a chain of) cycles with two multiple edges between two extreme vertices, which does not preserve the length of cycles in the original graph.
Given this, we consider a slightly generalized problem.
In addition to the input of {\sc Disjoint Even Cycle Transversal}, we are given a binary weight function $\omega\colon E \to \mathbb \{0, 1\}$ on edges, and the length of a cycle is defined to be the total weight of edges in it.
Indeed, when $\omega(e) = 1$ for all $e \in E$, the problem corresponds to {\sc Disjoint Even Cycle Transversal}.

Let $S \subseteq V$ such that $G[V \setminus S]$ is a forest of odd cacti.
We first apply Reduction~rules~\ref{rr:isolated-component}, \ref{rr:deg-1}, \ref{rr:must-be-in-solution}, and \ref{rr:neighbor-1} and Branching~rules~\ref{br:must-be-in-solution} and \ref{br:v-has-two-neighbors-in-two-components}, which are trivially safe for {\sc Disjoint Even Cycle Transversal}.
Moreover, we add the following reduction rule, which is also trivially safe.

\begin{reduce}\label{rr:ECT:must-be-in-solution}
    If there is a vertex $v \in V \setminus S$ such that $G[S \cup \{v\}]$ has a cycle of even length, then delete it and decrease $k$ by one.
\end{reduce}

We can check in linear time whether an edge-weighted graph is a forest of odd cacti and hence Reduction~rule~\ref{rr:ECT:must-be-in-solution} can be applied in linear time as well.



Up to this point, Assumption~\ref{assume:at-most-1-in-S} is made.
By Reduction~rule~\ref{rr:must-be-in-solution} and Branching~rule~\ref{br:must-be-in-solution}, we also assume that $m(u, v) \le 2$ for every pair of vertices in $G$.
Suppose that $m(u, v) = 2$.
If the parities of two edges between $u$ and $v$ are the same, the length of the cycle consisting of these edges is even.
Thus, we apply the following branching rule in this case.
\begin{branch}\label{br:even-K2}
    Suppose that $u, v \in V \setminus S$ and $m(u, v) = 2$ for some $u$.
    Let $f, f'$ be the edges between $u$ and $v$.
    If $\omega(f) = \omega(f')$, branch into two cases: (1) delete $u$ and decrease $k$ by one; (2) delete $v$ and decrease $k$ by one. 
\end{branch}

By Reduction~rule~\ref{rr:ECT:must-be-in-solution} and Branching~rule~\ref{br:even-K2}, the following assumption is made.
\begin{assume}\label{assume:parity}
    For every pair of vertices $u, v$ with $m(u, v) = 2$, the parities of the weights of edges between them are opposite.
\end{assume}

Now, let us consider a vertex $v \in V \setminus S$ that has exactly two neighbors in $G$.
Let $u$ and $w$ be the neighbors of $v$.
By Assumption~\ref{assume:at-most-1-in-S}, at least one of $u$ and $w$ belongs to $V \setminus S$.
Similarly to Lemma~\ref{lem:eliminate-deg2}, we define a graph $G'$ by deleting $v$ from $G$ and adding $p$ parallel edges between $u$ and $v$, where $p = \max(m(u, v), m(v, w))$.
We define the weight function $w'$ for $G'$ as follows.
If $p = 1$, we set the weight of the introduced edge $e = \{u, w\}$ as $\omega'(e) = \omega(f) + \omega(f')$, where $f$ (resp.~$f'$) is the edge between $u$ and $v$ (resp.~between $v$ and $w$) and the sum is taken under addition modulo two.
If $p = 2$, at least one of the pairs $\{u, v\}$ or $\{v, w\}$ has multiple edges.
By Assumption~\ref{assume:parity}, these two edges have different parities.
A crucial observation is that if there is a cycle passing through exactly one of these edges, there is another cycle passing through the other edges, which has the different parity. 
By setting $\omega'(e) = 0$ and $\omega'(e') = 1$, such cycles are preserved in $G'$. 

\begin{lemma}\label{lem:ECT:feasible}
    The instance $(G, \omega, S, k)$ is a yes-instance if and only if $(G', \omega', S, k)$ is a yes-instance.
\end{lemma}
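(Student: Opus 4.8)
The plan is to mirror the proof of Lemma~\ref{lem:eliminate-deg2}, replacing the notion of ``obstruction'' by that of ``even cycle'': recall that $G[V\setminus X]$ is a forest of odd cacti exactly when it has no cycle of even $\omega$-length. I would first show that every even cycle of $G$ through $v$ must pass through both $u$ and $w$, and then that the new parallel edges between $u$ and $w$ faithfully reproduce, parity by parity, the ways of transiting through $v$.

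First I would record the key structural fact. A cycle visiting $v$ uses exactly two edges incident to $v$. If both go to $u$ (which requires $m(u,v)=2$), they form a digon on $\{u,v\}$; by Assumption~\ref{assume:parity} the two parallel edges have opposite parities, so this digon has odd $\omega$-length, and the same holds for a digon on $\{v,w\}$. Hence every even cycle through $v$ uses one edge to $u$ and one edge to $w$, i.e.\ it consists of a $u$--$v$--$w$ transit together with a $u$--$w$ path $P$ in $G-v$; in particular it contains both $u$ and $w$. From this I can move $v$ out of any solution exactly as in Lemma~\ref{lem:eliminate-deg2}: if $X$ is a solution of $(G,\omega,S,k)$ with $v\in X$, pick a neighbor of $v$ lying in $V\setminus S$ --- one exists by Assumption~\ref{assume:at-most-1-in-S} --- say $u$, and set $X'=(X\setminus\{v\})\cup\{u\}$. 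Then $X'\subseteq V\setminus S$ and $|X'|\le|X|$, and any even cycle avoiding $X'$ either avoids $v$, hence avoids $X$ and contradicts that $X$ is a solution, or passes through $v$ and hence through $u\in X'$, which is impossible. Thus there is a minimum solution of $(G,\omega,S,k)$ that avoids $v$.

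It then remains to prove, for every $X\subseteq V\setminus S$ with $v\notin X$, that $G[V\setminus X]$ has an even cycle if and only if $G'[(V\setminus\{v\})\setminus X]$ does; since both problems use the same budget $k$ and $v\notin V(G')$, this yields the lemma (for the converse direction one simply notes that a solution $X'$ of $(G',\omega',S,k)$ automatically satisfies $v\notin X'$ and $X'\subseteq V\setminus S$). Even cycles avoiding $v$ are common to both graphs, since $G$ and $G'$ agree away from $v$ and the new edges. For the cycles through $v$ I would use the transit decomposition above: a transit cycle is the concatenation of a $u$--$v$--$w$ transit with a $u$--$w$ path $P$ of $G-v$, and its parity is that of the transit plus that of $P$; correspondingly, a cycle of $G'$ through a new edge $e$ is the concatenation of $e$ with such a path $P$, of parity $\omega'(e)$ plus that of $P$.

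The heart of the argument --- and the step I expect to be the main obstacle --- is checking that the set of achievable transit parities coincides with $\{\omega'(e): e \text{ a new edge}\}$. When $p=1$ there is a unique transit, of parity $\omega(f)+\omega(f')$, matched by the single edge of weight $\omega'(e)=\omega(f)+\omega(f')$. When $p=2$, Assumption~\ref{assume:parity} forces the two parallel edges at a multiedge end of the path to have opposite parities, so both transit parities $0$ and $1$ are realizable; these are matched by the two new edges of weights $0$ and $1$. Finally, if $u\in X$ or $w\in X$ there are no transit cycles through $v$ and no surviving new edges, so the correspondence is vacuously true there. Combining the cycles avoiding $v$ with the transit cycles then gives the equivalence of even cycles, and hence the lemma.
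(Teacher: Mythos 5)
Your proposal is correct and follows essentially the same route as the paper's own proof: the parity assumption forces every even cycle through $v$ to pass through both $u$ and $w$, so a minimum solution can be assumed to avoid $v$ (swapping $v$ for a neighbor in $V \setminus S$), and the matching of transit parities to the weights of the new edges---the ``crucial observation'' the paper states just before the lemma---shows that even cycles are preserved in both directions. Your write-up merely fills in details the paper compresses (the exchange argument, the vacuous case $\{u,w\}\cap X \neq \emptyset$, and the odd parity of the digon on the two new edges), but no genuinely different idea is involved.
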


\begin{proof}
    Consider a cycle $C$ of even length that passes through $v$ in $G$.
    By Assumption~\ref{assume:parity}, $C$ must pass through both $u$ and $w$.
    Thus, there is a feasible solution $X \subseteq V \setminus S$ for $(G, \omega, S, k)$ with $v \notin X$ and $\{u,w\} \cap X \neq \emptyset$.
    By the construction of $G'$, the cycle obtained from $C$ by omitting $v$ is an even cycle of $G'$.
    Hence, $X$ is a feasible solution for $(G', \omega', S, k)$.
    It is not hard to see that this correspondence is reversible and hence the lemma follows.
\end{proof}

This lemma ensures that the weighted version of Reduction~rule~\ref{rr:deg-2} is safe for {\sc Even Cycle Transversal}, and then Assumption~\ref{assume:deg-3} is made as well.
The rest of branching rules are the same with {\sc Disjoint Cactus Vertex Deletion}, which yields an $O^*(16.64^k)$-time algorithm that solves {\sc Disjoint Even Cycle Transversal} as well.

\begin{lemma}\label{lem:ECT:disjoint}
    Suppose $|S| \le k + 1$. Then, {\sc Disjoint Even Cycle Transversal} can be solved in time $O^*(16.64^k)$.
\end{lemma}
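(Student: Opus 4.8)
The plan is to show that the algorithm for {\sc Disjoint Even Cycle Transversal} is, apart from the handling of edge parities, the algorithm of Lemma~\ref{lem:DCVD} executed on the edge-weighted instance, and that it admits the \emph{same} measure $\mu(I) = \alpha k + \beta\cc(S) + \gamma\bridge(S)$ with the same constants $\alpha = 1$, $\beta = 0.4052$, $\gamma = 0.0726$. Because $G[S]$ is a forest of odd cacti, hence a cactus forest, Lemma~\ref{lem:cc+b} together with $\beta \ge \gamma$ still yields $\beta\cc(S) + \gamma\bridge(S) \le \beta(k+1)$, so the bound $\mu(I) \le 1.4052k + O(1)$ is unchanged. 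It therefore suffices to verify that every rule is safe for {\sc Disjoint Even Cycle Transversal} and that no branching rule has a larger branching factor than in the cactus analysis; the dominant factor then remains $(\alpha,\alpha,\alpha,\beta-2\gamma) = (1,1,1,0.26)$, the governing root is again $c < 7.3961$ solving $3x^{-1} + x^{-0.26} = 1$, and the running time is $O^*(c^{1.4052k}) \subseteq O^*(16.64^k)$.

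First I would dispose of the genuinely new rules. Reduction~rule~\ref{rr:ECT:must-be-in-solution} deletes a vertex $v$ for which $G[S\cup\{v\}]$ already contains an even cycle; as the vertices of $S$ are never removed, such a $v$ lies in every solution, so the rule is safe and strictly drops $\alpha k$ while leaving $\cc(S)$ and $\bridge(S)$ fixed. Branching~rule~\ref{br:even-K2} treats a double edge of equal parity, which is itself an even cycle, so at least one endpoint must be deleted; its factor $(\alpha,\alpha)$ coincides with that of Branching~rule~\ref{br:must-be-in-solution} and is dominated. These two rules, with Assumption~\ref{assume:parity}, are exactly what justifies the weighted degree-two elimination: Lemma~\ref{lem:ECT:feasible} supplies a feasibility- and even-cycle-preserving correspondence between $(G,\omega,S,k)$ and $(G',\omega',S,k)$, so the weighted form of Reduction~rule~\ref{rr:deg-2} is safe and Assumption~\ref{assume:deg-3} is reached just as for cacti. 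The earlier Reduction~rules~\ref{rr:isolated-component}--\ref{rr:neighbor-1} and Branching~rules~\ref{br:must-be-in-solution}, \ref{br:v-has-two-neighbors-in-two-components} are trivially safe here too, with unchanged factors.

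The crux is that the structural branching rules~\ref{br:k3-single-comp}--\ref{br:non-k3-distinct-comp} carry over verbatim. In each of them the branches partition every candidate solution $X$ according to $X \cap \{u,v,w\}$: the three delete-branches account for $X$ meeting $\{u,v,w\}$, and the ``put $u,v,w$ into $S$'' branch accounts for $X$ avoiding all three. The only property of the target class used in the safety proof is that it is hereditary: if $X$ avoids $u,v,w$, then $G[S\cup\{u,v,w\}]$ is an induced subgraph of $G[V\setminus X]$ and hence lies in the class, so the put-into-$S$ branch is available, and conversely that branch is taken only when $G[S\cup\{u,v,w\}]$ lies in the class. Replacing ``cactus forest'' by ``forest of odd cacti'', both hereditary, leaves this argument intact, and the changes to $\cc(S)$ and $\bridge(S)$ depend only on how the neighbours of $u,v,w$ are distributed among the components of $G[S]$ and on the bridge counts along the connecting paths — quantities insensitive to $\omega$ — so the branching factors are identical to the cactus case.

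The main obstacle I anticipate is accounting for the fact that the put-into-$S$ branch is feasible in \emph{strictly fewer} cases for {\sc Even Cycle Transversal} than for cacti. Concretely, in cases (b) and (c) of Branching~rule~\ref{br:non-k3-distinct-comp} (Figure~\ref{fig:c4}), adding $u,v,w$ closes a new cycle using the (necessarily all-bridge) path between the two neighbours lying in a common component; this cycle may be \emph{even}, so that $G[S\cup\{u,v,w\}]$, though a cactus forest, is not a forest of odd cacti and the branch is dropped. I would argue this never hurts the analysis: dropping a branch only lowers the branching number below the worst case $(1,1,1,0.26)$ — still attained whenever that cycle is odd — and it is safe precisely because, when the cycle is even, no solution can avoid all of $u,v,w$, so the three delete-branches already cover every solution. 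In every other branch the subgraph created by the assignment is either an obstruction (forcing a deletion, as for cacti) or contains only odd cycles, so its feasibility and induced measure drop agree with the cactus analysis. Hence each branch of the {\sc Even Cycle Transversal} algorithm decreases $\mu$ by at least as much as its cactus counterpart, the dominant factor is unchanged, and the $O^*(16.64^k)$ bound is inherited directly from Lemma~\ref{lem:DCVD}.
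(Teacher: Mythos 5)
Your proposal is correct and takes essentially the same route as the paper: the paper likewise keeps the measure $\mu(I) = \alpha k + \beta\cdot\cc(G[S]) + \gamma\cdot\bridge(G[S])$ with the same constants, introduces Reduction~rule~\ref{rr:ECT:must-be-in-solution} and Branching~rule~\ref{br:even-K2} to reach Assumption~\ref{assume:parity}, justifies the weighted degree-two elimination via Lemma~\ref{lem:ECT:feasible}, and then concludes that the remaining branching rules apply verbatim with unchanged factors. Your explicit argument that the put-into-$S$ branch may become infeasible for {\sc Even Cycle Transversal} (when the newly created cycle is even), and that skipping it is both safe (any solution must then hit $\{u,v,w\}$) and only improves the branching factor, is precisely the detail the paper leaves implicit, and you argue it correctly.
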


By Lemma~\ref{lem:compression}, {\sc Even Cycle Transversal} can be solved in time $O^*(17.64^k)$, completing the proof of Theorem~\ref{thm:d-CVD-ECT}.

\section*{Acknowledgments}
This work is partially supported by JSPS KAKENHI Grant Numbers JP18H04091, JP18H05291, JP18K11168, JP18K11169, JP19K21537, JP20H05793, JP20K11692, and JP20K19742.
The authors thank Kunihiro Wasa for fruitful discussions.

\bibliographystyle{plain}
\bibliography{main}

\end{document}